\newtheorem{definition}{Definition}
\newtheorem{proposition}{Proposition}
\newtheorem{assumption}{Assumption}
\newtheorem{theorem}{Theorem}
\title{\LARGE \bf
Markovian Decentralized Ensemble Control for Demand Response}
\author{Guanze Peng, Robert Mieth, Deepjyoti Deka, and Yury Dvorkin}
\newcommand{\subparagraph}{}
\titlespacing*{\section}{0pt}{5pt}{3pt}
\titlespacing*{\subsection}{0pt}{3pt}{0pt}
\titlespacing*{\subsubsection}{0pt}{3pt}{0pt}
\definecolor{frenchblue}{rgb}{0.0, 0.45, 0.73}
\begin{document}

\maketitle
\thispagestyle{empty}
\pagestyle{empty}

\begin{abstract}
With the advancement in smart grid and smart energy devices, demand response becomes one of the most economic and feasible solutions to ease the load stress of the power grids during peak hours. In this work, we propose a fully decentralized ensemble control framework with consensus for demand response (DR) events and compatible control methods based on random policies. We show that under the consensus that is tailored to DR, our proposed decentralized control method yields the same optimality as the centralized control method in both myopic and multistage settings. 
\end{abstract}

\section{Introduction}
{S}{upported} by the evolution of digital communication and control technology, electric power system operators deploy demand response (DR) programs to efficiently meet peak electricity demand and avoid network congestion. 
According to the US Federal Energy Regulatory Commission, potential peak demand reductions in the United States reached a total of approximately \unit[31]{GW} in 2019 \cite{lee2013assessment}. In the California Independent System Operator market, the total amount of demand response capacity registered grew from  0 to \unit[3.6]{GW} between June 2014 and June 2021 \cite{ciso2020}. 
DR programs engage distributed demand-side resources, e.g., controllable residential and commercial loads, by encouraging or directly controlling electricity customers to change their power consumption in exchange for an incentive.
To scale program efficiency, individual demand side resources (units) are often aggregated into ensembles operated by a DR aggregator or a utility/system operator.

\begin{figure}[t]
\centering
\begin{subfigure}[b]{0.47\textwidth}
  \includegraphics[width=\linewidth]{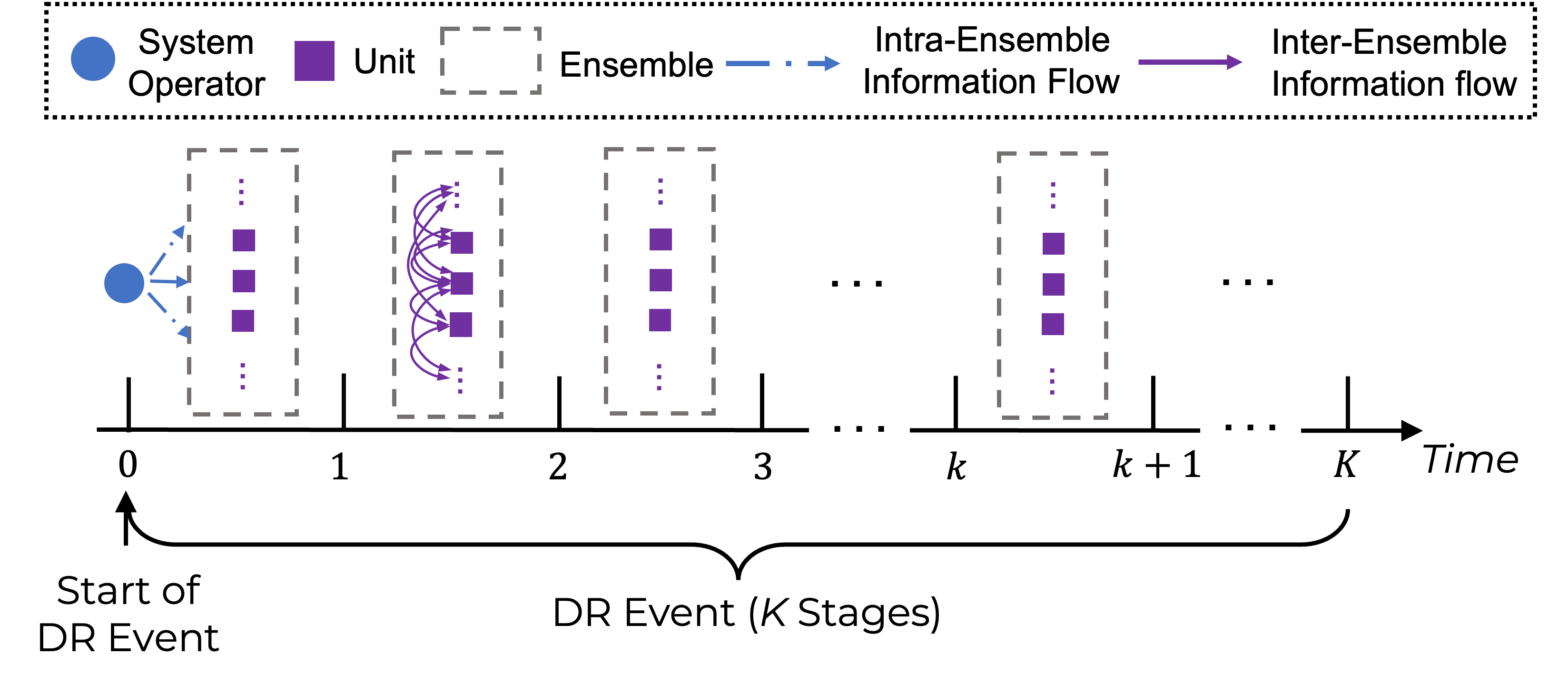}
		\caption{}\label{timeline}
\end{subfigure}
\begin{subfigure}[b]{0.45\textwidth}
  \includegraphics[width=\linewidth]{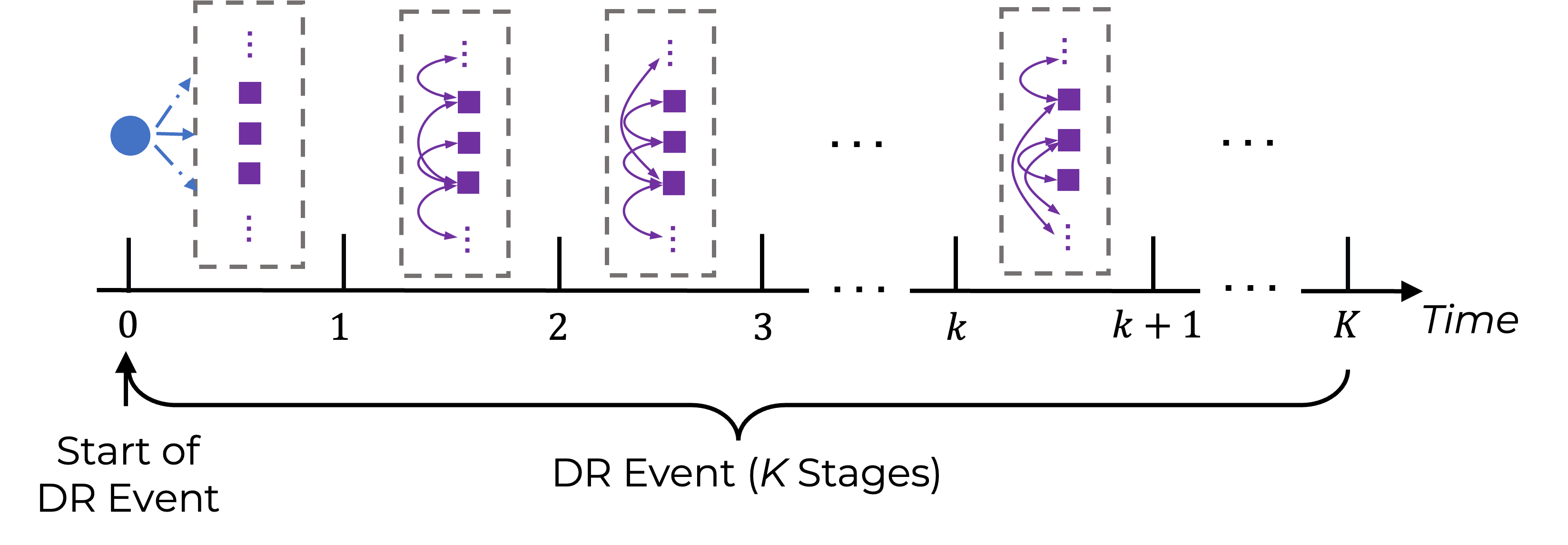}
		\caption{}\label{timeline2}
\end{subfigure}
\caption{{Decentralized ensemble control with different consensus schemes: in the $0$-\textit{th} stage of both schemes, the system operator broadcasts the message to the ensemble, informing them the start of DR event and the cost of each stage. (a) Global consensus scheme: units reach consensus in the first stage after being informed and do not communicate with each other in the following stages. (b) Local consensus scheme: units communicate with each other and reach consensus in each stage.}}\label{timelineall}
\end{figure}

Common DR architectures are centralized and set high communication and computation requirements, which requires the complete state information of all units in an ensemble, and all computation tasks to be concentrated and performed at the system operator end. \textcolor{black}{To improve computational feasibility, we propose a decentralized DR framework with consensus,  where units determine power consumption through negotiation among themselves, which aligns with the recent developments in smart power distribution system infrastructure. } 
DR decentralization is facilitated by the established communication protocol OpenADR 2.0 \cite{hassan2020hierarchical,openadr}, which is a protocol that standardizes the secure communication of dispatch signals (e.g., requests for load reduction) from the system operator to the individual units via a communication module.
However, as this communication module is also equipped with Virtual End Node, units can use OpenADR 2.0 to also communicate safely with each other. This enables a decentralized controller design, which can distribute the computation and communication load among all units.

Whenever DR is needed to support system operations, the system operator schedules a so-called DR event.
We call the time period between two valid consecutive OpenADR 2.0 dispatch commands the stage of a DR event. 
Since the dispatch commands are only valid for 15 minutes at most and DR events can last for hours, a DR event is a multistage process by definition. 
In the proposed framework, units are coupled through the consensus regarding their energy consumption policies. Particularly, we study two contrasting consensus schemes, \textit{global consensus} and \textit{local consensus}, both of which are illustrated in Fig.~\ref{timelineall}. 
Under the former, communication and consensus-forming between units is limited to the first stage. 
The latter assumes that units can communicate and reach consensus at every stage.
We show that the total communication throughput for the local consensus scheme is significantly reduced compared to global consensus, which facilitates practical implementation.
We summarize our main contributions as follows:
\begin{itemize}
	\item We propose a decentralized ensemble control framework with consensus. We leverage the Bayesian prior stage concept and propose a framework with prior consensus such that consensus is achieved among all units;
	\item We show that the designed decentralized controller is equivalent to a corresponding centralized one;
	\item We show that the global and local multistage consensuses yield the same energy consumption policy;
	\item We introduce a distributed gradient based algorithm that allows units to reach consensus.
\end{itemize}

We build on existing work on randomized DR control, e.g.,  \cite{chen2014individual, meyn2015ancillary, hao2012demand}, and leverage Linearly Solvable Markov decision processes (LS-MDP) \cite{todorov2006linearly} to model the behavior of a DR ensemble. 
LS-MDP have been shown to be particularly suitable for thermostatically controlled loads 
\cite{chertkov2018ensemble}, which play a central role for real-world DR programs as the thermal inertia of cooling and heating systems allows temporally adjusting power consumption without compromising primary system functions and causing large discomfort to electricity customers.   
Related work in \cite{hassan2020stochastic,pop2019markov} derived and applied optimal control policies for thermostatically controlled load (TCL) ensembles, modeled as LS-MDPs, to support power system operations, while minimizing the difference from individual default energy consumption patterns and thus the overall discomfort. 
\textcolor{black}{Comparing with existing decentralized control method \cite{hassan2020stochastic,pop2019markov, koch2011modeling}, our proposed model provides an elegant analytical solution which is easy to implement and can be easily extended to other demand response scenarios such as heating systems. }The decentralized consensus-based approach proposed in this paper is closely related to the decentralized LS-MDP \cite{wan2021distributed}, which studied a network of Bayesian agents in continuous Markov Processes (MP) without consensus. The introduced algorithm resembles distributed gradient algorithms over (random) networks \cite{lobel2010distributed,zhang2014online}.


\section{Preliminaries}

Consider an ensemble of $N$ units given by set $\mathcal{N}=\{1,2,...,N\}$. {\color{black}We model the operation of each unit using a shared set of states $\mathcal{S}=\{1,2,...,S\}$ reflecting its power consumption.} 
We assume in this paper:
 \begin{assumption}
\textit{Every unit can transition from any state in $\mathcal{S}$ to any state in $\mathcal{S}$. Moreover, the $n$-th uncontrolled unit at state $i$ transitions to state $j$ in the next stage with probability $\bar{p}^{(n)}_{ij}\in(0,1)$ such that $\sum_{j\in\mathcal{S}}\bar{p}^{(n)}_{ij}=1$, $i\in\mathcal{S}$, $n\in\mathcal{N}$.}
\end{assumption}

As per Assumption~1 each unit has a default randomized policy for each state, which captures the default energy consumption behavior of that unit prior to the DR event. Any deviation from this default behavior causes ``discomfort'' to the DR participant, which we measure by the difference between the controlled policy and the default policy.
We denote the cost of per unit discomfort of the $n$-th unit at state $i$ by $\gamma_i^{(n)}>0$, $i\in\mathcal{S}$, $n\in\mathcal{N}$. 
{\color{black}Let $q_{i}$, $i\in\mathcal{S}$ be the dollar cost of operation at state~$i$, i.e., power consumption times cost per unit of power.}
The cost function of the $n$-th agent in state $i$, $c^{(n)}(\cdot\,;\,\cdot):\Delta_S\times\mathcal{S}\rightarrow\mathbb{R}$, is given by
\begin{equation}\label{onestage}
	c^{(n)}(\mathbf{p}_i;i)=\sum_{j\in\mathcal{S}}p_{ij}q_{j}+\gamma^{(n)}_i\sum_{j\in\mathcal{S}} p_{ij}\ln\frac{p_{ij}}{\bar{p}^{(n)}_{ij}},\quad n\in\mathcal{N},\,i\in\mathcal{S},
\end{equation}
where $\mathbf{p}_i:=[p_{ij}]_{j\in\mathcal{S}}\in\Delta_\mathcal{S}$ is the controlled randomized policy\textcolor{black}{, and $\Delta_\mathcal{S}:=\{[\eta_{i}]_{i\in\mathcal{S}}:0\leq \eta_i\leq 1, i\in\mathcal{S},\sum_{i\in\mathcal{S}}\eta_i=1\}$ is the simplex over the finite set $\mathcal{S}$.} For the sake of readability, we omit stage index until Section IV.
The first term in \eqref{onestage} quantifies the expected physical energy cost and the second term quantifies the expected discomfort cost. 
We adopt the popular and effective Kullback–Leibler divergence \cite{kullback1951information} to measure the expected policy discomfort level.


\section{Decentralized Control with Consensus}
\textcolor{black}{ As a foundation to deriving the multistage consensus process, we first consider all units to be myopic, i.e., they only plan their consumption one stage ahead.}
We call the collection of states of units \textit{joint state}, denoted by $(i_n,\ n\in\mathcal{N})$, where $i_n\in\mathcal{S}$ is the state of the $n$-th user. \textcolor{black}{Let $i_n$ denote the state of unit $n$ and $\mathcal{N}_i=\{n\in\mathcal{N}:i_n=i\}$ denote the index set of units whose state is $i$ with $|\mathcal{{N}}_i|=N_i$.} Given the joint state, the one-stage optimization with consensus can be written as
\begin{equation}\label{CDO1}
\begin{aligned}
	\min_{\mathbf{p}^{(n)}_{i_n},\ n\in\mathcal{N}}&\quad \frac{1}{N}\sum_{n\in\mathcal{N}}{c}^{(n)}(\mathbf{p}^{(n)}_{i_n};{i_n})\\
	{s.t.}	&\quad \mathbf{p}^{(m)}_{i_m}=\mathbf{p}^{(n)}_{i_n},\quad \textcolor{black}{\forall}\, i_m=i_n\in\mathcal{S},\,\textcolor{black}{\forall}\, m,n\in\mathcal{N},\\
	&\quad 
	\mathbf{p}^{(n)}_{i_n}\in\Delta_\mathcal{S},\quad\ \textcolor{black}{\forall}\, n\in\mathcal{N},
\end{aligned}
\end{equation}
where the objective function is the average cost of all units. The first constraint indicates that two that are in the same state reach consensus on using the same randomized policy.
{\color{black}As a result, consensus between all units ensures least-cost and non-discriminatory ensemble operations,  even if cost functions $c^{(n)}$ are different for individual units.}
The solution generalizes LS-MDPs of homogeneous units  \cite{chertkov2018ensemble}.
\begin{theorem}\label{zero}
\textit{The optimal solution of \eqref{CDO1} is given by
\begin{equation}\label{optmat}
	{p}^*_{ij}=\frac{e^{({\mu}_{ij}-{q}_{ij})/{\gamma}_i}}{\sum_{s\in\mathcal{S}}e^{({\mu}_{is}-{q}_{is})/{\gamma}_i}},\quad i\,,j\in\mathcal{S},
\end{equation}
where
$
	{\gamma}_i=\frac{1}{N}\sum_{n\in\mathcal{N}_i}\gamma^{(n)}_{i},\,{\mu}_{ij}=\frac{1}{N}\sum_{n\in\mathcal{N}_i}\gamma^{(n)}_{i}\ln \bar{p}^{(n)}_{ij},$ and \textcolor{black}{$\ \theta_{i}=\frac{1}{N}\sum_{n\in\mathcal{N}_i}q_i$}.
Moreover, the optimal value of \eqref{CDO1} is given by:
\begin{equation}\label{optinvolv}
	c^*(i_n,n\in\mathcal{N}_i)=-\sum_{i\in\mathcal{S}}{\gamma}_i\ln\left(\sum_{j\in\mathcal{S}}e^{({\mu}_{ij}-\theta_{i})/{\gamma}_{ij}}\right).
\end{equation}
}
\end{theorem}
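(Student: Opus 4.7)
The plan is to use the consensus constraint to decouple the joint optimization over $N$ policy vectors into $|\mathcal{S}|$ uncoupled state-wise subproblems, and to solve each subproblem in closed form via a Lagrange multiplier on the simplex.

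First, I would invoke $\mathbf{p}^{(m)}_{i_m}=\mathbf{p}^{(n)}_{i_n}$ whenever $i_m=i_n$ to replace the per-unit policies by a single shared vector $\mathbf{p}_i$ indexed only by the current state $i\in\mathcal{S}$. Substituting and swapping sums rewrites the objective as $\frac{1}{N}\sum_{i\in\mathcal{S}}\sum_{n\in\mathcal{N}_i}c^{(n)}(\mathbf{p}_i;i)$, in which each $\mathbf{p}_i$ appears only in its own inner sum, so the problem splits into $|\mathcal{S}|$ independent minimizations over $\Delta_\mathcal{S}$. For each such subproblem I would split the log ratio, $\gamma^{(n)}_i\ln(p_{ij}/\bar p^{(n)}_{ij})=\gamma^{(n)}_i\ln p_{ij}-\gamma^{(n)}_i\ln\bar p^{(n)}_{ij}$, and average over $n\in\mathcal{N}_i$ to obtain an objective of the form $\gamma_i\sum_j p_{ij}\ln p_{ij}-\sum_j\mu_{ij}\,p_{ij}+(\text{linear physical-cost term})$ with exactly the aggregates $\gamma_i$ and $\mu_{ij}$ defined in the statement and an averaged cost coefficient that plays the role of $\theta_i$.

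Next, since $\gamma^{(n)}_i>0$ gives $\gamma_i>0$ and $x\mapsto x\ln x$ is strictly convex, each state-$i$ subproblem is strictly convex on $\Delta_\mathcal{S}$; Assumption~1 moreover guarantees $\bar p^{(n)}_{ij}\in(0,1)$, so the KL term is finite on the interior and the minimizer lies there. Attaching a multiplier $\lambda_i$ to $\sum_j p_{ij}=1$, writing the stationarity condition, solving for $p_{ij}$ explicitly, and eliminating $\lambda_i$ via normalization yields the Boltzmann/softmax form in~\eqref{optmat}. Plugging $p^*_{ij}$ back into the reduced objective, the linear and entropy pieces collapse into the negative log-partition value $-\gamma_i\ln\bigl(\sum_{s\in\mathcal{S}}e^{(\mu_{is}-\theta_i)/\gamma_i}\bigr)$; summing over $i\in\mathcal{S}$ reproduces~\eqref{optinvolv}.

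The principal obstacle will be bookkeeping: tracking how the one-index cost coefficients $q_j$ aggregate under the $\mathcal{N}_i$-average into the exponent term $\theta_i$ in the theorem, and verifying that the log-partition constants match when $p^*_{ij}$ from~\eqref{optmat} is substituted back into the reduced cost. The core optimization is the textbook KL-regularized linear program on the simplex, whose Gibbs/softmax solution is well known from LS-MDP theory, so no analytic difficulty is expected once the index matching is made explicit.
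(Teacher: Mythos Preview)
Your proposal is correct and follows essentially the same route as the paper: collapse the consensus constraint to a single $\mathbf{p}_i$ per state, decouple into $|\mathcal{S}|$ subproblems, form the Lagrangian with a multiplier $\lambda_i$ on the simplex, solve the first-order condition, and normalize to get the softmax. Your additional remarks on strict convexity and the interior location of the minimizer via Assumption~1 are welcome rigor that the paper leaves implicit, but the argument is otherwise identical.
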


\begin{proof}
\textcolor{black}{ Assuming that all units can reach consensus via information exchange, as dictated in \eqref{CDO1}, the optimal solution can be found by the following reformulation of \eqref{CDO1}:
}
	\begin{equation}\label{postopt}
\begin{aligned}
	\min_{\mathbf{p}_{i},i\in\mathcal{S}}&\quad \frac{1}{N}\sum_{i\in\mathcal{S}}\sum_{n\in\mathcal{N}_i}{c}^{(n)}(\mathbf{p}_{i};{i})\\
	\text{s.t.}	&\quad 	\mathbf{p}_{i}\in\Delta_\mathcal{S},\quad\ \ \textcolor{black}{\forall}\,i\in\mathcal{S},
\end{aligned}
\end{equation}
	for which we derive the following Lagrangian:
	\begin{equation*}
		\begin{aligned}
			&\mathcal{L}(\{\mathbf{p}_i\}_{i\in\mathcal{S}},,\{\lambda_i\}_{i\in\mathcal{S}};\{i_n\}_{n\in\mathcal{N}})\\
			&\qquad\qquad\qquad\qquad=\sum_{i\in\mathcal{S}} \left(\frac{1}{N}\sum_{n\in\mathcal{N}}c^{(n)}(\mathbf{p}_i)+\lambda_i\mathbf{1}^{\rm T}_S\mathbf{p}_i\right),
		\end{aligned}
	\end{equation*}
	where $\lambda_i$ is the Lagrangian multiplier for state $i\in\mathcal{S}$, and $\mathbf{1}_S=[1,1,...,1]^{\rm T}\in\mathbb{R}^S$. By the first-order condition, we obtain
	\begin{equation}\label{mattmp}
		p_{ij}^*=e^{({\mu}_{ij}-\lambda_i-\theta_i)/{\gamma}_i-1}.
	\end{equation}
	By \eqref{mattmp} and the constraint in \eqref{postopt}, we obtain \eqref{optmat}, which yields the optimal value \eqref{optinvolv}. 
\end{proof}

When the ensemble is large, the joint state $(i_n,\ n\in\mathcal{N})$ results in a large state space, which is difficult to accommodate in the multistage case. Moreover, the analysis of \eqref{CDO1} is complicated by the non-linearity of  the optimal solution and the optimal value in the joint state $(i_n$, $n\in\mathcal{N})$. 
\textcolor{black}{Also, units are forced to reach consensus only with units in the same state:} A closer inspection of Theorem~\ref{zero} prompts that \eqref{CDO1} can be decomposed into $S$ \textit{sub-consensus} problems. For each $i\in\mathcal{S}$, let the $i$-th sub-consensus problem of \eqref{CDO1} be
\begin{equation}\label{CDO2}
\begin{aligned}
	\min_{\mathbf{p}^{(n)}_{i},n\in\mathcal{N}_i}&\quad \frac{1}{N}\sum_{n\in\mathcal{N}_i}{c}^{(n)}(\mathbf{p}^{(n)}_{i};{i})\\
	\text{s.t.}	&\quad \mathbf{p}^{(m)}_{i}=\mathbf{p}^{(n)}_{i},\quad\ \textcolor{black}{\forall}\,m,\,n\in\mathcal{N}_i,\\
	&\quad 
	\mathbf{p}_{i}^{(n)}\in\Delta_\mathcal{S},\quad\ \ \textcolor{black}{\forall}\,n\in\mathcal{N}_i.
\end{aligned}
\end{equation}
The first constraint implies that each unit reaches consensus with other units at the same state, rather than with all units in the ensemble. \textcolor{black}{A trivial consensus scheme where units implement the same randomized policy for every state, }
\begin{equation}\label{trivial}
\begin{aligned}
	\min_{\mathbf{p}^{(n)},n\in\mathcal{N}}&\quad \frac{1}{N}\sum_{n\in\mathcal{N}}{c}^{(n)}(\mathbf{p}^{(n)};{i_n})\\
	\text{s.t.}	&\quad \mathbf{p}^{(m)}=\mathbf{p}^{(n)},\quad\ \textcolor{black}{\forall}\,m,n\in\mathcal{N},\\
	&\quad 
	\mathbf{p}^{(n)}\in\Delta_\mathcal{S},\quad\ \ \textcolor{black}{\forall}\,n\in\mathcal{N},
\end{aligned}
\end{equation}
\textcolor{black}{can assure that all units reach consensus with each other, yet yielding a greater or equal cost compared to \eqref{CDO1}. Therefore, both \eqref{CDO1} and \eqref{trivial} have significant drawbacks.}

To balance the caveats of \eqref{CDO1} and the sub-optimality of  \eqref{trivial}, we propose a fully decentralized framework using Bayesian approach, which achieves consensus across the entire ensemble. To elaborate our idea, we introduce two important concepts, prior consensus and posterior consensus.

\subsection{Prior and Posterior Consensus}

Consider a centralized randomized ensemble control scenario, where a system operator controls an ensemble with the optimal solution of \eqref{CDO1}. \textcolor{black}{Let $x_i:={N}_i/N$ be the percentage of units at state $i$ in the ensemble}, and we call $\mathbf{x}=[x_{i}]_{i\in\mathcal{S}}\in\Delta_\mathcal{S}$ the \textit{ensemble state}. Units are assumed to implement their randomized policy independently and, thus, each unit is subject to the randomness of its randomized policy independently. As a result, the decentralized control in \eqref{CDO1} has the same ensemble state transition  as its centralized counterpart \eqref{postopt}.

To obtain the optimization of the form \eqref{CDO1}, the system operator needs to know the joint state $(i_n$, $n\in\mathcal{N})$. 
We call a system operator that knows the full joint state \textit{informed} and a system operator that only knows the ensemble state $\mathbf{x}$ \textit{uninformed}.
For an uninformed Bayesian system operator, prior belief is given by ensemble state $\mathbf{x}$ yielding
\begin{equation}\label{PriorBayes}
\begin{aligned}
	\textcolor{black}{\min_{\mathbf{p}_i\in\Delta_\mathcal{S}, i\in\mathcal{S}}\quad \frac{1}{N}\sum_{n\in\mathcal{N}}\mathbb{E}_{i\sim\mathbf{x}}\left[{c}^{(n)}(\mathbf{p}_i;{i})\right].}	\end{aligned}
\end{equation}
If we regard state $i_n$, $n\in\mathcal{S}$, as the signal in the Bayesian framework, which is drawn from identically independent distributions determined by ensemble state $\mathbf{x}$, then \eqref{PriorBayes} is a Bayesian optimization problem in the prior stage \cite{greenberg2012introduction}. 

For the informed system operator, since all signals are known, the centralized scheme for \eqref{CDO1} yields \eqref{postopt}, which is a Bayesian optimization at the posterior stage with beliefs
\begin{equation*}
	\mathbb{P}[s_n=i|s_n=i_n]=\begin{cases}	
1 &{i=i_n},\\
0 &\text{otherwise},
\end{cases}
\end{equation*}
where $s_n$ is a random variable representing the state of the \mbox{$n$-th} unit. Indeed, \eqref{PriorBayes} and \eqref{postopt} are coupled by Bayes' rule as:
\begin{equation*}
	\mathbb{P}[s_n=i|s_n=i_n]=\frac{\mathbb{P}[s_n=i_n|s_n=i]\mathbb{P}[s_n=i]}{\sum_{j\in\mathcal{S}}\mathbb{P}[s_n=i_n|s_n=j]\mathbb{P}[s_n=j]},
\end{equation*}  
where $\mathbb{P}[s_n=i]=x_i$, $i\in\mathcal{S}$. Moreover, since \eqref{PriorBayes} equals
$$\min_{\mathbf{P}\in\mathcal{P}}\quad \frac{1}{N}\sum_{n\in\mathcal{N},i\in\mathcal{S}}x_i\,{c}^{(n)}(\mathbf{p}_i;i),
$$
where $\mathbf{P}:=[\mathbf{p}^{\rm T}_i]_{i\in\mathcal{S}}$ and \eqref{PriorBayes} can be viewed as a linearization of \eqref{postopt} using Bayesian beliefs.

\subsection{Decentralized Ensemble Control with Prior Consensus}

Based on the discussion in  Section III-A, we define
\begin{definition}
\textit{In decentralized consensus problems, units reach prior consensus, if the randomized policies of units converge to the optimal solution of \eqref{PriorBayes}, and units reach posterior consensus, if the randomized policies of units converge to the optimal solution of \eqref{CDO1}.} \end{definition}
 
In a decentralized framework with prior consensus, we have the following optimization:
\begin{equation}\label{priorop}
\begin{aligned}
	\min_{\mathbf{p}^{(n)}_i,i\in\mathcal{S},n\in\mathcal{N}}\quad&\frac{1}{N}\sum_{n\in\mathcal{N}}\mathbb{E}_{i\sim\mathbf{x}}\left[{c}^{(n)}(\mathbf{p}^{(n)}_i;{i})\right]\\
	\text{s.t.}\quad&\mathbf{p}^{(n)}_i=\mathbf{p}^{(m)}_i,\quad \textcolor{black}{\forall\,}i\in\mathcal{S},\,\textcolor{black}{\forall\,} m,n\in\mathcal{N},\\
	&\mathbf{p}_i^{(n)}\in\Delta_\mathcal{S},\quad \ \textcolor{black}{\forall\,}i\in\mathcal{S},\, \textcolor{black}{\forall\,}n\in\mathcal{N}.
	\end{aligned}
\end{equation}
By using $\mathbb{E}_{i\sim\mathbf{x}}[\,\cdot\,]$, each unit disregards its own state information and only relies on the prior distribution of the whole ensemble state to reach prior consensus. 

\begin{theorem}\label{Priortheorem}
\textit{The optimal solution of \eqref{priorop}, $\mathbf{P}^*=[p_{ij}^*]_{i,j\in\mathcal{S}}$, exists and is given by
\begin{equation*}
	{p}^*_{ij}=\frac{e^{(\bar{\mu}_{ij}-q_j)/\bar{\gamma}_i}}{\sum_{s\in\mathcal{S}}e^{(\bar{\mu}_i-q_s)/\bar{\gamma}_i}},\quad j\in\mathcal{S},
\end{equation*}
where $\bar{\gamma}_i=\frac{1}{N}\sum_{n\in\mathcal{N}}\gamma^{(n)}_i$ and $\bar{\mu}_{ij}=\frac{1}{N}\sum_{n\in\mathcal{N}}\gamma^{(n)}_i\ln \bar{p}^{(n)}_{ij}$. Moreover, the optimal value of \eqref{CDO1} is
\begin{equation*}
	c^*(\mathbf{x})=-\sum_{i\in\mathcal{S}}x_i\bar{\gamma}_i\ln\left(\sum_{j\in\mathcal{S}}e^{(\bar{\mu}_{ij}-q_j)/\bar{\gamma}}_i\right).
\end{equation*}}
\end{theorem}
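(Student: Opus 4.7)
The plan is to mirror the strategy used for Theorem~\ref{zero}: exploit the consensus constraint to collapse all per-unit policies $\mathbf{p}^{(n)}_i$ into shared policies $\mathbf{p}_i$, rearrange the prior expectation so that the problem decouples across $i\in\mathcal{S}$, then apply a first-order condition with a single simplex Lagrange multiplier per state. The key difference from Theorem~\ref{zero} is that the averages are taken over \emph{all} units rather than only those currently in state $i$, which produces the ``barred'' parameters $\bar{\gamma}_i$ and $\bar{\mu}_{ij}$ instead of $\gamma_i$ and $\mu_{ij}$.

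First, using the equality constraints $\mathbf{p}^{(n)}_i=\mathbf{p}^{(m)}_i$, replace each $\mathbf{p}^{(n)}_i$ by a common $\mathbf{p}_i$. Expanding $\mathbb{E}_{i\sim\mathbf{x}}[\,\cdot\,]$ and swapping the order of summation rewrites the objective as $\sum_{i\in\mathcal{S}}x_i\cdot\tfrac{1}{N}\sum_{n\in\mathcal{N}}c^{(n)}(\mathbf{p}_i;i)$. Substituting \eqref{onestage} and collecting the terms that depend on $n$ yields an inner average of the form $\sum_{j\in\mathcal{S}}p_{ij}q_j+\sum_{j\in\mathcal{S}}p_{ij}\bigl(\bar{\gamma}_i\ln p_{ij}-\bar{\mu}_{ij}\bigr)$, with $\bar{\gamma}_i$ and $\bar{\mu}_{ij}$ exactly as defined in the theorem. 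This is the analogue of the averaging step in the proof of Theorem~\ref{zero}, except that heterogeneous defaults $\bar{p}^{(n)}_{ij}$ and weights $\gamma^{(n)}_i$ are now absorbed across the entire ensemble because each unit computes the expectation against the prior $\mathbf{x}$ rather than its own state.

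Second, since $\mathbf{p}_i$ appears only inside the $i$-th summand, the problem separates into $S$ independent convex minimizations over $\Delta_\mathcal{S}$. For each $i$ with $x_i>0$, form the Lagrangian with multiplier $\lambda_i$ for $\mathbf{1}^{\rm T}_S\mathbf{p}_i=1$; the nonnegativity constraints are inactive because the $p\ln p$ term prevents boundary solutions. The stationarity condition gives $p_{ij}\propto\exp\bigl((\bar{\mu}_{ij}-q_j)/\bar{\gamma}_i\bigr)$, and enforcing the simplex constraint fixes the normalization, yielding the stated softmax expression. For indices with $x_i=0$ the sub-problem is vacuous, and the same formula may be adopted by convention. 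Existence follows from continuity of the objective on the compact simplex (with $0\ln 0:=0$).

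Third, plug $p^*_{ij}$ back into the per-state cost. A direct calculation shows that $q_j+\bar{\gamma}_i\ln p^*_{ij}-\bar{\mu}_{ij}$ simplifies to the constant $-\bar{\gamma}_i\ln\bigl(\sum_{s\in\mathcal{S}}e^{(\bar{\mu}_{is}-q_s)/\bar{\gamma}_i}\bigr)$, independent of $j$, so weighting by $x_i$ and summing over $i$ produces the claimed $c^*(\mathbf{x})$. I do not expect a genuine obstacle; the main conceptual point (rather than a computational difficulty) is recognizing that taking the expectation against the fixed prior $\mathbf{x}$ is what removes the coupling across states that complicated the posterior-consensus problem \eqref{CDO1}, after which the remaining derivation is a softmax normalization essentially identical to Theorem~\ref{zero}.
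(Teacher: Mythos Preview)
Your proposal is correct and follows essentially the same approach as the paper: collapse the per-unit policies via the consensus constraint, write the Lagrangian $\mathcal{L}(\{\mathbf{p}_i\},\{\lambda_i\};\mathbf{x})=\sum_{i\in\mathcal{S}}\bigl(x_i\tfrac{1}{N}\sum_{n\in\mathcal{N}}c^{(n)}(\mathbf{p}_i)+\lambda_i\mathbf{1}^{\rm T}_S\mathbf{p}_i\bigr)$, apply the first-order condition to obtain the exponential form, and normalize. Your write-up is simply more explicit than the paper's terse version (separating the states, handling $x_i=0$, and spelling out how the barred averages arise), but the argument is the same.
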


\begin{proof}
The proof is similar to that of Theorem~\ref{zero} with the following Lagrangian:
	\begin{equation*}
		\mathcal{L}(\{\mathbf{p}_i\}_{i\in\mathcal{S}},\{\lambda_i\}_{i\in\mathcal{S}};\mathbf{x})=\sum_{i\in\mathcal{S}}\left(x_i\frac{1}{N}\sum_{n\in\mathcal{N}}c^{(n)}(\mathbf{p}_i)+\lambda_i\mathbf{1}^{\rm T}_S\mathbf{p}_i\right).
	\end{equation*}
	By the first-order condition, we obtain
	\begin{equation*}
		p_{ij}^*=e^{(\bar{\mu}_{ij}-\lambda_i/x_i-\bar{q}_{ij})/\bar{\gamma}_i-1}.
	\end{equation*}
	The rest of the proof is analogous to Theorem 1.
\end{proof}

As shown in Theorem 2, the optimal solution is more interpretable and scalable than that in Theorem~\ref{zero}, because the optimal value of \eqref{priorop} is linear in the prior belief/ensemble state and the optimal solution is independent of the prior belief/ensemble state. Furthermore, there does not exist sub-consensus, as the $\bar{\gamma}_i$ and $\bar{\mu}_{ij}$ are averaged over all units for each $i,j\in\mathcal{S}$ in contrast to \eqref{CDO1}. Compared to \eqref{trivial}, \eqref{priorop} results in a lower cost as it allows units to execute state-dependent randomized policies.
{\color{black}Note that auxiliary terms $\bar{\gamma}_i$ and $\bar{\mu}_{ij}$ do not need to be communicated among the units as we show in \eqref{valuefns} below.}

\section{Multistage Consensus}\label{}

This section focuses on the multistage decentralized optimization problem with prior consensus. We first discuss the DR problem with \textit{multistage global consensus}. 

\subsection{Global Consensus}
In the global consensus scheme, units face an open-loop optimization problem, where they can reach consensus only once before choosing a randomized policy. For every $n\in\mathcal{N}$ and $\mathbf{x}=[x_i]_{i\in\mathcal{S}}\in\Delta_{\mathcal{S}},$ define ${c}^{(n)}_\ell:\mathcal{P}\times\Delta_\mathcal{S}\rightarrow\mathbb{R}$ as
\begin{equation*}
	{c}^{(n)}_\ell(\mathbf{P},\mathbf{x})=\sum_{i\in\mathcal{S}}x_i\left(q_{i,\ell}+\gamma^{(n)}_{i,\ell}\sum_{j\in\mathcal{S}} p_{ij}\ln\frac{p_{ij}}{\bar{p}^{(n)}_{ij,\ell}}\right),\ 1\leq \ell<L,
\end{equation*}
which is an average cost function of the $n$-th unit evaluated over $\mathbf{x}$ as prior distribution in stage $\ell$. \textcolor{black}{Here, $\mathcal{P}:=\Delta_\mathcal{S}\times \dots\times\Delta_\mathcal{S}\subseteq\mathbb{R}^{S\times S}$. } Let ${c}^{(n)}_L(\mathbf{x})=\sum_{i\in\mathcal{S}}x_i\,\cdot\,q_{i,L}$ be the terminal cost for units. The consensus optimization becomes:
 \begin{equation}\label{global}
	\begin{aligned}
		 &\min_{\{\mathbf{P}^{(n)}_{\ell}\}_{\ell=1}^{L-1}}\ \  \frac{1}{N}\sum_{n\in\mathcal{N}}\mathbb{E}_1\left[\sum_{\ell=1}^{L-1}{c}^{(n)}_\ell(\mathbf{P}^{(n)}_\ell,\mathbf{x}^{(n)}_\ell)+{c}^{(n)}_L(\mathbf{x}^{(n)}_{L})\right],\\
		 &\ \ \text{s.t.}\ \  \mathbf{x}^{(n)}_{\ell+1}=\left(\mathbf{P}^{(n)}_{\ell}\right)^{\rm T}\mathbf{x}^{(n)}_{\ell}+\mathbf{Q}_{t+1}^{(n)},\  \textcolor{black}{\forall\,}n\in\mathcal{N},\textcolor{black}{\forall\,}1\leq\ell< L,\\
		  &\ \ \quad\ \ \, \mathbf{P}^{(m)}_\ell=\mathbf{P}^{(n)}_\ell,\quad\quad\quad\quad\quad\ \ \textcolor{black}{\forall}\,m,n\in\mathcal{N},\textcolor{black}{\forall}\,1\leq\ell< L,\\
		 &\ \ \quad\ \ \, \mathbf{P}^{(n)}_\ell\in\mathcal{P},\quad\quad\quad\quad\quad\quad\quad\ \ \textcolor{black}{\forall}\,n\in\mathcal{N},\textcolor{black}{\forall}\,1\leq\ell< L,
		 \end{aligned}	
\end{equation}
where the expectation $\mathbb{E}_{1}[\cdot]$ is taken over future ensemble states $\mathbf{x}_\ell$, $1<\ell\leq L$, with prior belief $\mathbf{x}_1$.
 The first constraint stems from the implementation of randomized policies, \textcolor{black}{where $\mathbf{Q}_t^{(n)}\sim(\mathbf{0},\text{diag}[\frac{1}{N}\sum_{i\in\mathcal{S}}x_i^2(p_{ij}-p_{ij}^2)]_{j\in\mathcal{S}})$ is the induced Gaussian noise}\footnote{\textcolor{black}{The ensemble state is approximated by Gaussian random variables based on the central limit theorem \cite{jacod2004probability}.}}. The second constraint is the consensus constraint requiring all units to reach consensus on the policy matrix. The last constraint guarantees that  decision variable $\mathbf{P}$ is a valid policy matrix.

To find the optimal solution of \eqref{global}, consider an uninformed system operator, controlling the ensemble in the open-loop fashion as
\begin{equation}\label{globalcen}
	\begin{aligned}
		 \min_{\{\mathbf{P}_{\ell}\}_{\ell=1}^{L-1}}\quad &\frac{1}{N}\sum_{n\in\mathcal{N}}\mathbb{E}_1\left[\sum_{\ell=1}^{L-1}{c}^{(n)}_\ell(\mathbf{P}_\ell,\mathbf{x}_\ell)+{c}^{(n)}_L(\mathbf{x}_{L})\right],\\
		 \text{s.t.}\quad & \mathbf{x}_{\ell+1}=\mathbf{P}_{\ell}^{\rm T}\mathbf{x}_{\ell}+\mathbf{Q}_{\ell+1},\quad\ \, \textcolor{black}{\forall}\,1\leq\ell< L,\\
		 & \mathbf{P}_\ell\in\mathcal{P},\quad\quad\quad\quad\quad\ \ \ \ \textcolor{black}{\forall}\,1\leq\ell< L,
		 \end{aligned}	
\end{equation}
To proceed, let the system operator be equipped with an increasing sequence of \textcolor{black}{$\sigma$}-algebras \textcolor{black}{\cite{jacod2004probability}}:
\begin{equation*}
\mathcal{I}_{\ell}=\begin{cases}
	\sigma(\{\mathbf{x}_{1}\}) &{\ell = 1,}\\
	\sigma(\{\mathbf{x}_{t}\}_{t=1}^{\ell},\{\mathbf{P}_{t}\}_{t=1}^{\ell-1}) &{2\leq \ell< L,}
\end{cases}	
\end{equation*}
such that \textcolor{black}{$\mathcal{I}_{\ell_1}\subseteq\mathcal{I}_{\ell_2}$ for all $1\leq \ell_1\leq \ell_2< L$. Then,} we can rewrite the objective function in \eqref{globalcen} as
\begin{equation*}
	\mathbb{E}_1\left[\sum_{\ell=1}^{L-1}\mathbb{E}_\ell\left[\frac{1}{N}\sum_{n\in\mathcal{N}}{c}^{(n)}_\ell(\mathbf{P}_\ell,\mathbf{x}_\ell)+\frac{1}{N}\sum_{n\in\mathcal{N}}{c}^{(n)}_L(\mathbf{x}_L)\right]\right],
\end{equation*}
where $\mathbb{E}_\ell[\,\cdot\,]:=\mathbb{E}[\,\cdot\,|\mathcal{I}_\ell]$. Moreover, define $v_{\ell}(\,\cdot\,):\Delta_\mathcal{S}\rightarrow\mathbb{R}$ as
\begin{equation}\label{equi}
	v_{\ell}(\mathbf{x}_\ell)=\min_{\{\mathbf{P}_{t}\}_{t=\ell}^{L-1}}\ \mathbb{E}_\ell\left[\frac{1}{N}\sum_{t=\ell}^{L-1}{c}^{(n)}_{t}(\mathbf{P}_{t},\mathbf{x}_{t})+{c}^{(n)}_L(\mathbf{x}_{L})\right].
\end{equation}
Here, $v_{\ell}(\,\cdot\,)$ is the value function that satisfies the following Bellman equation: for $1\leq\ell<L$, ${v}_{\ell}(\mathbf{x}_\ell)=\min_{\mathbf{P}_\ell\in\mathcal{P}}{V}_\ell(\mathbf{P}_\ell,\mathbf{x}_\ell)$, where ${V}_\ell(\mathbf{P}_\ell,\mathbf{x}_\ell)=\mathbb{E}_\ell\left[\frac{1}{N}\sum_{n\in\mathcal{N}}{c}^{(n)}_\ell(\mathbf{P}_\ell,\mathbf{x}_\ell)+{v}_{\ell+1}\left(\mathbf{x}_{\ell+1}\right)\right].$ The optimal solution of \eqref{globalcen} is then:
\begin{theorem}\label{dytheorem}
\textit{The optimal solution of \eqref{globalcen}, $\mathbf{P}_{\ell}^*=[p_{ij,\ell}^*]_{i,j\in\mathcal{S}}$, $1\leq\ell<L$, exists, and is given by
\begin{equation*}
	{p}^*_{ij,\ell}=\frac{e^{(\bar{\mu}_{ij,\ell}-v_{j,\ell+1})/\bar{\gamma}_{i,\ell}}}{\sum_{s\in\mathcal{S}}e^{(\bar{\mu}_{i,\ell}-v_{s,\ell+1})/\bar{\gamma}_{i,\ell}}},\quad j\in\mathcal{S},
\end{equation*}
where $\bar{\gamma}_{\ell}=\frac{1}{N}\sum_{n\in\mathcal{N}}\gamma^{(n)}_{i,\ell},$ and $\bar{\mu}_{ij,\ell}=\frac{1}{N}\sum_{n\in\mathcal{N}}\gamma^{(n)}_{\ell}\ln \bar{p}^{(n)}_{ij,\ell}$. Moreover, the value function of \eqref{globalcen} in stage $\ell$ is given by
\begin{equation}\label{valuemulti}
	v_\ell(\mathbf{x})=\sum_{i\in\mathcal{S}}x_iq_{i,\ell}-\sum_{i\in\mathcal{S}}x_i\bar{\gamma}_{i,\ell}\ln\left(\sum_{j\in\mathcal{S}}e^{(\bar{\mu}_{ij,\ell}-v_{j,\ell+1})/\bar{\gamma}_{i,\ell}}\right),
	\end{equation}
	with $v_L(\mathbf{x})=c_L(\mathbf{x}).$}
\end{theorem}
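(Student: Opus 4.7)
The plan is to argue by backward induction on $\ell$ using the Bellman recursion ${v}_\ell(\mathbf{x}_\ell)=\min_{\mathbf{P}_\ell\in\mathcal{P}}{V}_\ell(\mathbf{P}_\ell,\mathbf{x}_\ell)$ introduced just after \eqref{equi}, and at each stage to reduce the per-stage minimization to the single-stage problem already solved in Theorem~\ref{Priortheorem}. The base case is immediate: $v_L(\mathbf{x})=c_L(\mathbf{x})=\sum_{i\in\mathcal{S}} x_i q_{i,L}$ is linear in $\mathbf{x}$. The induction invariant I would maintain is that $v_{\ell+1}(\mathbf{x})=\sum_{j\in\mathcal{S}} x_j\,v_{j,\ell+1}$ for some coefficients $v_{j,\ell+1}$ (which are exactly the coefficients read off from \eqref{valuemulti}).

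For the inductive step, combining the ensemble dynamics $\mathbf{x}_{\ell+1}=\mathbf{P}_\ell^{\rm T}\mathbf{x}_\ell+\mathbf{Q}_{\ell+1}$ with the zero-mean property of $\mathbf{Q}_{\ell+1}$ and the induction hypothesis gives $\mathbb{E}_\ell[v_{\ell+1}(\mathbf{x}_{\ell+1})]=\sum_{i,j} x_{i,\ell}\,p_{ij,\ell}\,v_{j,\ell+1}$. Expanding $\frac{1}{N}\sum_n c^{(n)}_\ell(\mathbf{P}_\ell,\mathbf{x}_\ell)$ and collecting the coefficient of each $x_i$ shows that the Bellman objective decouples across states $i$, with each row-subproblem reading
\begin{equation*}
\min_{\mathbf{p}_i\in\Delta_\mathcal{S}}\ \sum_{j\in\mathcal{S}} p_{ij}\,v_{j,\ell+1}\;-\;\sum_{j\in\mathcal{S}} p_{ij}\,\bar{\mu}_{ij,\ell}\;+\;\bar{\gamma}_{i,\ell}\sum_{j\in\mathcal{S}} p_{ij}\ln p_{ij},
\end{equation*}
plus a $\mathbf{P}_\ell$-independent constant $\sum_i x_i q_{i,\ell}$. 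This is structurally identical to the per-row subproblem in the proof of Theorem~\ref{Priortheorem} under the relabeling $q_j \mapsto v_{j,\ell+1}$; the same simplex Lagrangian first-order condition then yields the claimed softmax expression for $p^*_{ij,\ell}$. Substituting this optimizer back recovers $v_\ell(\mathbf{x})$ in the form \eqref{valuemulti}, which is again linear in $\mathbf{x}$ with new coefficients $v_{i,\ell}=q_{i,\ell}-\bar{\gamma}_{i,\ell}\ln(\sum_j e^{(\bar{\mu}_{ij,\ell}-v_{j,\ell+1})/\bar{\gamma}_{i,\ell}})$, closing the induction. Existence of the minimizer at each stage is automatic because the entropic term is strictly convex on the compact simplex.

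The main obstacle, and the only step where real care is needed, is verifying that linearity of the value function is preserved through the Bellman update. Linearity is doing two jobs simultaneously: it allows the additive noise $\mathbf{Q}_{\ell+1}$ to be integrated out without leaving behind a variance-dependent residual, and it makes the Bellman minimization decouple across rows of $\mathbf{P}_\ell$ so that the softmax derivation of Theorem~\ref{Priortheorem} can be recycled verbatim. Any curvature in $v_{\ell+1}$ would recouple the rows through the covariance of $\mathbf{Q}_{\ell+1}$ and destroy the closed-form structure. Concretely, the one nontrivial algebraic check is that plugging the softmax optimizer back into the per-row objective returns a constant times $x_i$ (namely the new $v_{i,\ell}$) rather than a nonlinear function of $\mathbf{x}$; once this is confirmed, the induction runs without further incident.
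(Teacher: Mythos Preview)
Your proposal is correct and follows essentially the same approach as the paper: backward induction on $\ell$ via the Bellman recursion, with each per-stage minimization reduced to the single-stage problem of Theorem~\ref{Priortheorem} under the relabeling $q_j\mapsto v_{j,\ell+1}$. If anything, your write-up is more careful than the paper's, since you make explicit both the linearity invariant and the role of the zero-mean noise $\mathbf{Q}_{\ell+1}$ in keeping the expectation tractable, points the paper's proof leaves implicit.
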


\begin{proof}
	In stage $L-1$, all units can be considered as myopic. A direct application of Theorem~\ref{Priortheorem} gives
	\allowdisplaybreaks
	\begin{equation*}
	v_{L-1}(\mathbf{x})=\sum_{i\in\mathcal{S}}x_iq_{i,L-1}-\sum_{i\in\mathcal{S}}x_i\bar{\gamma}_{i,L-1}\ln\sum_{j\in\mathcal{S}}e^{(\bar{\mu}_{ij,L-1}-q_{j,L})/\bar{\gamma}_{i,L-1}}.
	\end{equation*}
	Suppose that \eqref{valuemulti} holds for $\ell+1$. In stage $\ell$, 
	\begin{equation*}
	\begin{aligned}
	v_{\ell}(\mathbf{x})=\sum_{i\in\mathcal{S}}x_iq_{i,\ell} +\min_{\mathbf{P}\in\mathcal{P}}\sum_{i,j\in\mathcal{S}}x_ip_{ij}\left(v_{j,\ell+1}+	\bar{\gamma}_{i,\ell}\ln p_{ij}-\bar{{\mu}}_{ij,\ell}\right),
	\end{aligned}
	\end{equation*}
where $v_{i,\ell+1} =q_{j,\ell+1}- \bar{\gamma}_{i,\ell+1}\ln\left(\sum_{j\in\mathcal{S}}e^{(\bar{\mu}_{ij,\ell+1}-v_{j,\ell+2})/\bar{\gamma}_{i,\ell+1}}\right).$
	Hence, we can use Theorem~\ref{Priortheorem} again by replacing $q_j$ with $v_{j,\ell+1}$, and by backwards induction, the theorem follows. 
\end{proof}

The theorem above provides a useful insight. The optimal solution of \eqref{globalcen} is independent of the ensemble state. 
In fact, it follows from Theorem~\ref{Priortheorem} that the optimal policy $\mathbf{P}^*_\ell$ does not require the knowledge of the estimate $\mathbf{x}_\ell$. Only the computation of $\mathbf{x}_{\ell+1} = \mathbf{P}^{*\rm T}_\ell\mathbf{x}_\ell$ requires it. In other words, in the backward-forward algorithm, $\mathbf{x}_\ell$ is used only in the forward part of the multistage algorithm. This is beneficial as then value function $v_{\ell}(\mathbf{x}_\ell)$ is linear in $\mathbf{x}_\ell$ and can generate a tractable policy. As a matter of fact, we can distribute the consensus reaching process across all the stages as indicated in Fig.~\ref{timeline2},  which not only reduces the computation load of each unit, but also lowers the communication overhead.

\subsection{Local Consensus}

Suppose that units only reach consensus in each stage, i.e., reach \textit{multistage local consensus} as shown in Fig.\ref{timeline2}. 
Using ${v}^{(n)}_{L}\left(\mathbf{x}_{L}\right)=c^{(n)}_L\left(\mathbf{x}_{L}\right)$, for $1\leq\ell<L$, the consensus problem in stage $\ell$, $1\leq\ell<L$ is:
\allowdisplaybreaks
\begin{equation}\label{local}
	\begin{aligned}
		 \min_{\mathbf{P}_\ell^{(n)},n\in\mathcal{N}}\quad &\textcolor{black}{\frac{1}{N}\sum_{n\in\mathcal{N}}\mathbb{E}_\ell\left[{c}^{(n)}_{\ell}(\mathbf{P}^{(n)}_\ell,\mathbf{x}_{\ell}^{(n)})+{v}^{(n)}_{\ell+1}\left(\mathbf{x}^{(n)}_{\ell+1}\right)\right]}\\
		  \text{s.t.}\quad & \mathbf{x}^{(n)}_{\ell+1}=\left(\mathbf{P}^{(n)}_{\ell}\right)^{\rm T}\mathbf{x}^{(n)}_{\ell}+\mathbf{Q}_{\ell+1}^{(n)},\quad\textcolor{black}{\forall}\, n\in\mathcal{N},\\
		  & \mathbf{P}^{(m)}_\ell=\mathbf{P}^{(n)}_\ell,\quad\quad\quad\quad\quad\quad\quad\ \, \textcolor{black}{\forall}\, m,n\in\mathcal{N},\\
		 & \mathbf{P}^{(n)}_\ell\in\mathcal{P},\quad\quad\quad\quad\quad\quad\quad\quad\ \textcolor{black}{\forall}\,n\in\mathcal{N},
		 \end{aligned}	
\end{equation}%
\allowdisplaybreaks[0]%
where
${v}^{(n)}_{\ell}\left(\mathbf{x}_{\ell}\right)=\min_{\mathbf{P}\in\mathcal{P}}\mathbb{E}_\ell\left[{c}^{(n)}_{\ell}\left(\mathbf{P},\mathbf{x}_{\ell}\right)+{v}^{(n)}_{\ell+1}\left(\mathbf{x}_{\ell+1}\right)\right]:=\min_{\mathbf{P}\in\mathcal{P}}{V}^{(n)}_{\ell}\left(\mathbf{P},\mathbf{x}_{\ell}\right):=\min_{\mathbf{P}\in\mathcal{P}}\sum_{i\in\mathcal{S}}x_{i,\ell}V^{(n)}_{i,\ell}(\mathbf{P})$. Note that \eqref{global} and \eqref{local}  have different objective functions with the same set of constraints. The following theorem points out the equivalence between \eqref{global} and \eqref{local}.

\begin{theorem}\label{equivalent}
\textit{The optimal solutions to \eqref{global} collectively for $1\leq\ell<L$ and \eqref{local} coincide.}
\end{theorem}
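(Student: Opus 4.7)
The plan is to prove Theorem~\ref{equivalent} by backward induction on the stage index $\ell$, leveraging the crucial observation (already noted after Theorem~\ref{dytheorem}) that the optimal global-consensus policy $\mathbf{P}_\ell^*$ is \emph{independent} of the ensemble state $\mathbf{x}_\ell$. Since the only additional information local consensus gives the units at stage $\ell$ is the realized $\mathbf{x}_\ell$, a state-independent optimizer renders that advantage void, so the two formulations must produce identical policies.

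First I would record two structural facts. From \eqref{valuemulti}, the global value function is linear in $\mathbf{x}$; write $v_\ell(\mathbf{x})=\sum_{i\in\mathcal{S}}x_i w_{i,\ell}$ for coefficients $w_{i,\ell}$ read off \eqref{valuemulti}. Analogously, define per-unit value functions $v_\ell^{(n)}$ by the one-stage Bellman recursion that \eqref{local} induces, and set up the induction hypothesis that each $v_\ell^{(n)}$ is linear in $\mathbf{x}$ with coefficients $w_{i,\ell}^{(n)}$ whose average $\tfrac{1}{N}\sum_n w_{i,\ell}^{(n)}$ equals $w_{i,\ell}$.

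Next I would run the induction. The base case $\ell=L$ is immediate because ${c}_L^{(n)}(\mathbf{x})=\sum_i x_i q_{i,L}$ is already linear with $w_{i,L}^{(n)}=q_{i,L}=w_{i,L}$. For the inductive step at stage $\ell$, I substitute $\mathbf{x}_{\ell+1}^{(n)}=(\mathbf{P}^{(n)}_\ell)^{\mathrm{T}}\mathbf{x}_{\ell}^{(n)}+\mathbf{Q}_{\ell+1}^{(n)}$ into $v_{\ell+1}^{(n)}(\mathbf{x}_{\ell+1}^{(n)})$; by linearity of $v_{\ell+1}^{(n)}$ and $\mathbb{E}[\mathbf{Q}_{\ell+1}^{(n)}]=\mathbf{0}$, the noise drops out and the stage-$\ell$ local objective reduces to an instance of the one-stage problem solved in Theorem~\ref{Priortheorem} with $q_j$ replaced by $w_{j,\ell+1}^{(n)}$. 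Under the consensus constraint the $\mathbf{P}^{(n)}_\ell$ collapse to a common $\mathbf{P}_\ell$, so the averaging in the objective combines with the averaging in Theorem~\ref{Priortheorem}'s definitions of $\bar{\gamma}_{i,\ell}$ and $\bar{\mu}_{ij,\ell}$ to produce the same closed-form optimizer as in Theorem~\ref{dytheorem}. Plugging this optimizer back yields $v_\ell^{(n)}$ in the same linear form, with aggregated coefficients matching $w_{i,\ell}$, thereby closing the induction and establishing that the stage-wise minimizers of \eqref{local} coincide with those of \eqref{global}.

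I expect the main obstacle to be two bookkeeping details. The first is handling the Gaussian consensus-induced noise $\mathbf{Q}_{\ell+1}^{(n)}$ inside nested expectations so that it cleanly vanishes at each step; this is where the linearity induction hypothesis is essential, because any quadratic term in $v_{\ell+1}^{(n)}$ would leak a variance contribution that depends on $\mathbf{P}_\ell$ and would destroy the equivalence. The second is verifying that the per-unit state updates $\mathbf{x}_{\ell+1}^{(n)}$ all agree once the consensus constraint forces a common $\mathbf{P}_\ell$ and the units share the same prior $\mathbf{x}_\ell$, so that the per-unit dynamics in \eqref{local} collapse onto the single aggregate dynamics of \eqref{globalcen}; only after that collapse can the objectives be compared term by term.
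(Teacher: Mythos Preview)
Your proposal is correct and follows essentially the same backward-induction argument as the paper: both establish linearity of the per-unit value functions $v^{(n)}_\ell$, use the hypothesis that their averaged coefficients $\tfrac{1}{N}\sum_n w^{(n)}_{i,\ell+1}$ coincide with the global coefficients $v_{i,\ell+1}$, and then invoke the one-stage result (Theorem~\ref{Priortheorem}) to identify the local-consensus optimizer at stage $\ell$ with the global one from Theorem~\ref{dytheorem}. Your treatment is in fact slightly more explicit than the paper's about why the zero-mean noise $\mathbf{Q}^{(n)}_{\ell+1}$ drops out (namely, linearity of $v^{(n)}_{\ell+1}$), which the paper leaves implicit.
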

\begin{proof}
First realize that \eqref{local} is equivalent to a myopic case. Therefore, the optimal value of \eqref{local} is given by
\begin{equation*}
	v^{*}_\ell(\mathbf{x})=\sum_{i\in\mathcal{S}}x_iq_{i,\ell}-\sum_{i\in\mathcal{S}}x_i\bar{\gamma}_{i,\ell}\ln\left(\sum_{j\in\mathcal{S}}e^{(\bar{\mu}_{ij,\ell}-\bar{v}_{j,\ell+1})/\bar{\gamma}_{i,\ell}}\right),
\end{equation*}
where $	\bar{v}_{i,\ell} = \frac{1}{N}\sum_{n\in\mathcal{N}}{v}^{(n)}_{i,\ell}$, and
\begin{equation}\label{valuefns}
	{v}^{(n)}_{i,\ell}=q_{i,\ell}-{\gamma}_{i,\ell}^{(n)}\ln\left(\sum_{j\in\mathcal{S}}e^{(\gamma_i^{(n)}\ln\bar{p}_{ij}^{(n)} -{v}^{(n)}_{j,\ell+1})/\gamma_i^{(n)}}\right).
\end{equation}
By induction and since $	{v}_{L}(\mathbf{x})=\frac{1}{N}\sum_{n\in\mathcal{N}}{v}^{(n)}_{L}(\mathbf{x})=\frac{1}{N}\sum_{n\in\mathcal{N}}c^{(n)}_L(\mathbf{x})$,
we have in stage $L-1$ that $v^{*}_{L-1}(\mathbf{x}) =v_{L-1}(\mathbf{x})
$. Suppose that at time $\ell+1$, $v^{*}_{\ell+1}(\mathbf{x})=v_{\ell+1}(\mathbf{x})$, which implies $\bar{v}_{i,\ell+1}=v_{i,\ell+1}.$ In stage $\ell$,
\begin{equation*}
	v^{*}_\ell(\mathbf{x})=\sum_{i\in\mathcal{S}}x_i\left(q_{i,\ell}-\bar{\gamma}_{i,\ell}\ln\left(\sum_{j\in\mathcal{S}}e^{(\bar{\mu}_{ij,\ell}-\bar{v}_{j,\ell+1})/\bar{\gamma}_{i,\ell}}\right)\right)=v_{\ell}(\mathbf{x}).
\end{equation*}
This completes the proof.
\end{proof}
Theorem~\ref{equivalent} shows that with local consensus, we can distribute the consensus across all stages and have the same result as global consensus. While global and local consensus schemes yield the same result with prior consensus, it may not be true for the case with posterior consensus. This is because that posterior belief requires units to know the state of other units in every stage, i.e., the joint state information in each stage, yet global consensus only allows units to acquire the joint stage information at most once, as shown in Fig.\ref{timeline}.

\section{Distributed Projected Gradient Algorithm}

To enable decentralized control, we make the following assumption of the communication between units:

\begin{assumption}\label{Poi}
Each unit chooses to share information with another unit according to an independent Poisson process with rate $r$, $r>0$. The $m$-th unit shares information with the $n$-th unit with probability $g^{(mn)}\in(0,1)$, $m,n\in\mathcal{N}$.\footnote{This assumption indicates that the vector $\mathbf{g}^{(m)}:=[g^{(mn)}]_{n\in\mathcal{N}}$ is a probability vector, i.e., $\mathbf{g}^{(m)}\in\Delta_{\mathcal{N}}$, $m\in\mathcal{N}$. It is worth noting that here we make a stronger assumption $g^{(mn)}\in(0,1)$ in contrast to that in convention $g^{(mn)}\in[0,1]$ or $g^{(mn)}$ is time variant for two reasons: 1) OpenADR 2.0 supports that each unit is able to communicate with all the other units, and 2) the relaxation of conditions on $g^{(mn)}$ needs more sophisticated technical treatment, which is beyond the scope of this paper. }
\end{assumption}
With this assumption, the arrival of the interactions between two arbitrary units is a Poisson process with rate $Nr$ guaranteeing that the probability of more than one unit being active at the same time is 0, thus avoiding data collisions.  
Assumption~\ref{Poi} also allows us to discretize the continuous time index by the number of arrivals. 
We let units exchange the policy matrices with each other and update their local policy matrices. We specify the exchanged information shared between units by discussing the difference in algorithm implementation between global and local consensus.

With the global consensus scheme, let $\mathbf{P}^{(n)}(t)=\big[\mathbf{P}^{(n)}_\ell(t)\big]_{1\leq \ell<L}$ be the collection of {policy matrices} maintained by the unit $n$ at time $t$, where $t$ is the continuous time index. Denote the time instant of the $k$-th information exchange by $t_k$. By letting $\mathbf{P}^{(n)}(k)=\mathbf{P}^{(n)}(t_k) $, we propose the following algorithm to update $\mathbf{P}^{(n)}(k)$:
\begin{equation}\label{updatemat}
	\begin{aligned}
		&\mathbf{Q}^{(n)}(k)=\sum_{m=1}^Ng^{(mn)}\mathbf{P}^{(n)}(k),\\
		&\mathbf{P}^{(n)}({k+1})=\hat{\mathcal{T}}\left(\mathbf{Q}^{(n)}(k)-\alpha_{k}\nabla_{\mathbf{P}^{(n)}(k)}\mathbf{V}^{(n)}(\mathbf{Q}^{(n)}(k))\right),
	\end{aligned}
\end{equation}
where 
$\mathbf{V}^{(n)}(\mathbf{P}):=[V_{i,\ell}^{(n)}(\mathbf{P}_{\ell})]_{i\in\mathcal{S},1\leq \ell<L}:=[\mathbf{V}_{\ell}^{(n)}(\mathbf{P}_{\ell})]_{1\leq \ell<L}$
with $\mathbf{P}=[\mathbf{P}_\ell]_{1\leq \ell<L}$.
Here, $\hat{\mathcal{T}}=[\mathcal{T}]_{1\leq \ell<L}$, and $\mathcal{T}:\mathbb{R}^{S\times S}\rightarrow\mathcal{P}$ projects a $S$-by-$S$ matrix onto the space of policy matrices $\mathcal{P}$. Moreover, $\alpha_k$ is a non-negative number. Algorithm \eqref{updatemat} resembles the \textit{distributed projected subgradient algorithm} over undirected graphs \cite{lobel2010distributed}. 
Convergence of \eqref{updatemat} to the optimal solution of \eqref{priorop} is guaranteed by:
\begin{proposition}[\cite{lobel2010distributed}]\label{asuuu}
\textit{	Let $\{\mathbf{P}^{(n)}(k)\}$ be generated by \eqref{updatemat} with $\alpha_k$ satisfying $\lim_{k\rightarrow\infty}\alpha_k=0$ and $\sum_{k=1}^\infty\alpha_k=\infty$. Then, $
	\lim_{k\rightarrow\infty}\big\|\text{vec}\left(\mathbf{P}^{(n)}(k)-\mathbf{P}^*\right)\big\|^2=0,\ \textcolor{black}{\forall}\,n\in\mathcal{N},
$ where $\mathbf{P}^*$ is an optimal solution to \eqref{priorop}.}
\end{proposition}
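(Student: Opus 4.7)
The plan is to invoke the convergence theorem of the distributed projected (sub)gradient method of \cite{lobel2010distributed} after recasting \eqref{updatemat} in its standard form and verifying its hypotheses. I would proceed in three steps: (i) match the notation of our algorithm to the standard one, (ii) verify convexity, compactness, communication-matrix, and subgradient-boundedness hypotheses, and (iii) identify the consensus limit with the unique minimizer from Theorem~\ref{Priortheorem}.

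\emph{Step (i): recasting.} Define the local cost of unit $n$ by $f^{(n)}(\mathbf{P}):=\sum_{i\in\mathcal{S}}\sum_{\ell=1}^{L-1}V^{(n)}_{i,\ell}(\mathbf{P}_\ell)$, so that the multistage prior-consensus problem is equivalent to $\min_{\mathbf{P}\in\mathcal{P}}\tfrac{1}{N}\sum_{n\in\mathcal{N}} f^{(n)}(\mathbf{P})$, whose unique minimizer is the $\mathbf{P}^*$ of Theorem~\ref{Priortheorem}. Under Assumption~\ref{Poi}, between consecutive arrival times $t_k$ and $t_{k+1}$ exactly one pair of units is active almost surely, so the iteration \eqref{updatemat} is precisely the averaging-and-projection step with row-stochastic weight matrix $G=[g^{(mn)}]$ and step size $\alpha_k$ analyzed in \cite{lobel2010distributed}.

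\emph{Step (ii): hypothesis verification.} The feasible set $\mathcal{P}$ is a Cartesian product of probability simplices, hence convex and compact, and $\hat{\mathcal{T}}$ is a non-expansive projection onto it. Each $f^{(n)}$ is convex because it is a sum of affine terms in $\mathbf{P}$ and KL divergences, the latter being convex in their first argument. Assumption~\ref{Poi} with $g^{(mn)}\in(0,1)$ makes $G$ irreducible and aperiodic with strictly positive entries, so the bounded-intercommunication and geometric-ergodicity conditions required by \cite{lobel2010distributed} hold. The step-size conditions $\alpha_k\to 0$, $\sum_k\alpha_k=\infty$ are assumed in the statement. The main obstacle will be the bounded-subgradient hypothesis: the KL term contributes a gradient component $\gamma^{(n)}_{i,\ell}(\ln p_{ij,\ell}-\ln\bar{p}^{(n)}_{ij,\ell}+1)$, which diverges as $p_{ij,\ell}\to 0$. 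I plan to circumvent this by restricting $\mathcal{P}$ to $\mathcal{P}_\varepsilon:=\{\mathbf{P}\in\mathcal{P}:p_{ij,\ell}\geq\varepsilon\text{ for all }i,j,\ell\}$ for small $\varepsilon>0$; since the minimizer in Theorem~\ref{Priortheorem} has the strictly-positive soft-max form, it lies in the interior of $\mathcal{P}$, so for all sufficiently small $\varepsilon$ the restricted problem has the same optimum. On $\mathcal{P}_\varepsilon$ the gradients of $f^{(n)}$ are uniformly bounded, closing the gap.

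\emph{Step (iii): identification of the limit.} With the hypotheses in place, the cited theorem yields both a vanishing disagreement $\lim_{k\to\infty}\|\mathbf{P}^{(n)}(k)-\bar{\mathbf{P}}(k)\|\to 0$ for every $n$, where $\bar{\mathbf{P}}(k):=\tfrac{1}{N}\sum_n \mathbf{P}^{(n)}(k)$, and convergence of $\bar{\mathbf{P}}(k)$ to a minimizer of $\tfrac{1}{N}\sum_n f^{(n)}$. Since by Theorem~\ref{Priortheorem} this minimizer is the unique $\mathbf{P}^*$ of \eqref{priorop}, combining the two limits gives $\|\mathrm{vec}(\mathbf{P}^{(n)}(k)-\mathbf{P}^*)\|^2\to 0$ for every $n$, as claimed. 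The only nontrivial technical point is the boundary behavior of the KL gradient; once the interior-restriction argument above is in place, the rest is a direct quotation of \cite{lobel2010distributed}.
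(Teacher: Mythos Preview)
The paper does not prove this proposition at all: it is stated with a citation to \cite{lobel2010distributed} and no argument is given. Your proposal therefore goes well beyond the paper's treatment, which amounts to ``this is a direct instance of the result in \cite{lobel2010distributed}.''

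Your verification sketch is essentially sound in spirit---cast \eqref{updatemat} as a distributed projected subgradient method, check the standing hypotheses, and invoke the cited convergence theorem---and the $\varepsilon$-restriction to handle the blow-up of the KL gradient at the boundary is a legitimate and standard device, since the soft-max optimizer is strictly interior. Two minor points are worth tightening. First, the minimizer you want in Step~(iii) is that of Theorem~\ref{dytheorem} (the multistage result), not Theorem~\ref{Priortheorem}; the collection $\mathbf{P}^*=[\mathbf{P}^*_\ell]_{1\le\ell<L}$ lives in the multistage problem, and the paper's reference to \eqref{priorop} in the proposition is itself loose. Second, under Assumption~\ref{Poi} the effective mixing matrix at step $k$ is random (only one pair communicates per arrival), not the fixed expected matrix $G=[g^{(mn)}]$ you write; the relevant framework in \cite{lobel2010distributed} is the random-network version, and irreducibility/positive lower bounds should be argued for the random sequence of gossip matrices rather than for $G$ itself. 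Neither point changes the overall strategy, which is exactly the intended one.
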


Theorem~\ref{equivalent} guarantees the effectiveness of the local consensus scheme to distribute the communication overhead among all units without performance losses. As a result, the communicated matrix size for each unit changes from one $S\times\, S\times\, (L-1)$ matrix to $L-1$ matrices sized $S\times S$. 
With local consensus, for $n\in\mathcal{N}$ and $1\leq \ell<L$, units update their maintained policy matrix $\mathbf{P}_\ell^{(n)}(k)$ in the $\ell$-\textit{th} stage as
\allowdisplaybreaks
\begin{equation}\label{updatemat2}
	\begin{aligned}
		&\mathbf{Q}_\ell^{(n)}(k)=\sum_{m=1}^Ng^{(mn)}(k)\mathbf{P}_\ell^{(n)}(k),\\
		&\mathbf{P}_\ell^{(n)}(k+1)=\mathcal{T}\left(\mathbf{Q}_\ell^{(n)}(k)-\alpha_k\nabla_{\mathbf{P}_\ell^{(n)}(k)}\mathbf{V}_\ell^{(n)}(\mathbf{Q}_\ell^{(n)}(k))\right),
	\end{aligned}
\end{equation}%
\allowdisplaybreaks[0]%
and convergence of \eqref{updatemat2} is an immediate result of Proposition \ref{asuuu}. Algorithm \eqref{updatemat2} not only reduces the data load for each communication, but also only ask units only needs to perform $\nabla_{\mathbf{P}_\ell^{(n)}(k)}$ instead of $\nabla_{\mathbf{P}^{(n)}(t_k)}$, which changes the dimension of complexity from $S\times S\times ({L-1})$ to $L-1$ repetitions of ${S}\times {S}$.

\section{Simulations}


\textcolor{black}{Based on the proposed model, a unit first computes its value functions $v_{i,\ell}^{(n)}$ defined in \eqref{valuefns} for all $i\in\mathcal{S}$ and $1\leq \ell< L$, which can be stored in a table. Then, with global/local consensus, units update their policy matrix according to \eqref{updatemat}/\eqref{updatemat2}. The convergence threshold regarding the error defined in Proposition \ref{asuuu} for convergence is set to be 0.01. When the algorithm converges, each unit draws a number uniformly distributed in $[0, 1]$. Let $\xi$ be the number drawn by unit $n$ at state $i$ and $p^*_{i}=[p^*_{ij}]_{j\in\mathcal{S}}\in\Delta_\mathcal{S}$ be its policy. If $\xi\in\left[\sum_{j=1}^sp^*_{ij},\sum_{j=s+1}^Sp^*_{ij}\right)$, unit $n$ changes its state to $s$.}

We carry out our case study using electricity consumption data from a building of New York University located in Manhattan, NY, from 2018 \cite{hassan2020hierarchical}. Using this data, we generate an ensemble of 100 synthetic buildings. The aggregated power consumption data for this ensemble of 100 buildings is displayed in Fig.~\ref{defaultmat}a and is used to construct the MP with 20 Markovian states for the period from mid-June to late-September, when all DR events in this region occurred in 2018. The resulting default transition probabilities $\bar{\mathbf{P}}$ are shown in Fig.~\ref{defaultmat}. To model each unit's default transition matrix, we perturb the matrix $\bar{\mathbf{P}}$ using uniform noise supported on $[0,0.05]$ for each unit, and then project the perturbed matrix onto $\mathcal{P}$. The discomfort cost is uniformly chosen from the set $[16, 24]$. We randomly generate default policies and discomfort costs for $N=100$ units. For simulating a realistic DR event, we isolate the period from 11:00 AM to 3:00 PM on July 2nd, 2018 when such an event happened in Manhattan, NY and NYU buildings were called upon to provide DR. We price electricity consumption during the DR event using the real-world tariff from Consolidated Edison \cite{condison}. Beside, the duration of each stage is set to be \unit[15]{minutes}, and there are $L=20$ stages.

\begin{figure}[t]
	\begin{center}
		\includegraphics[width=0.46\textwidth]{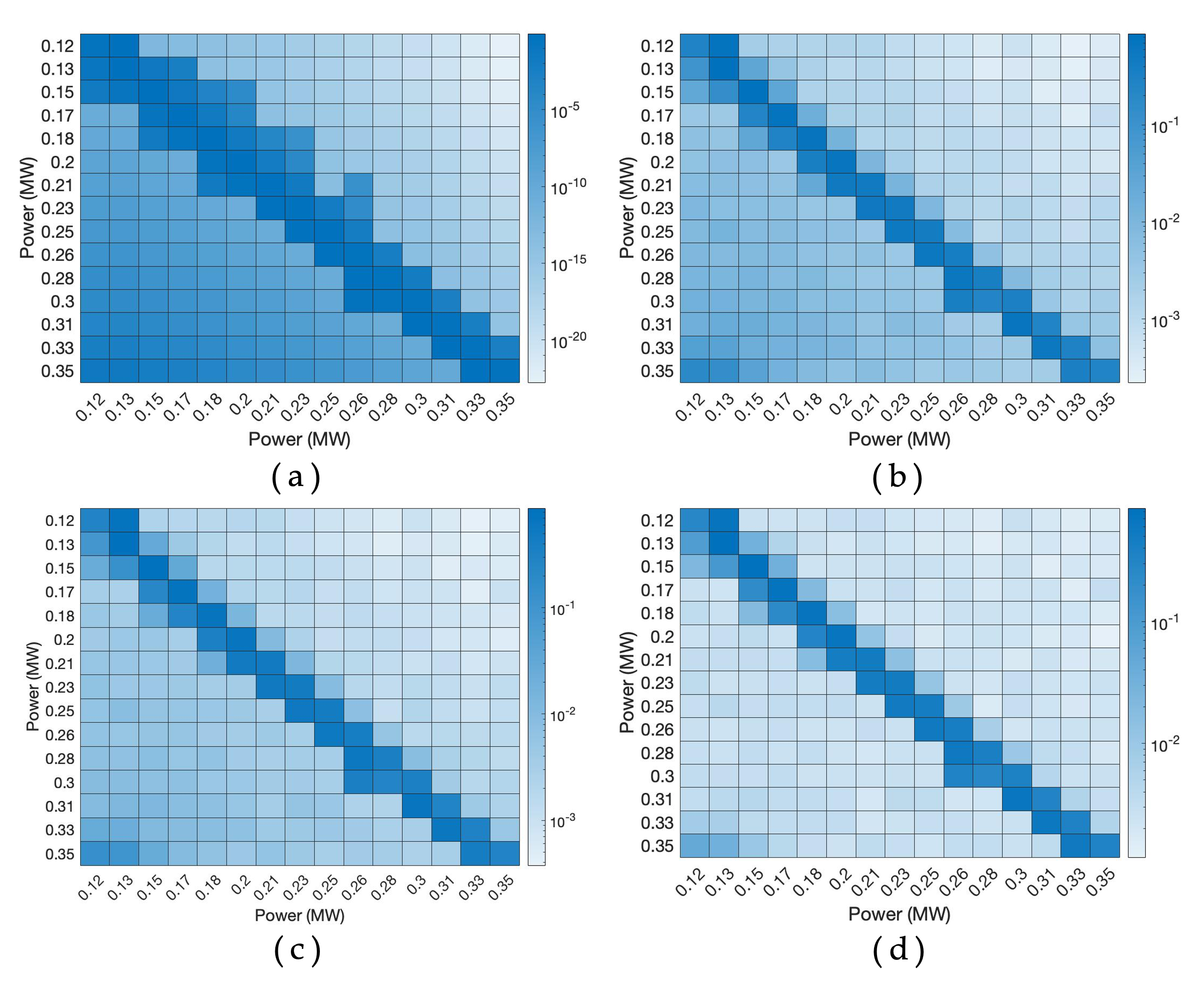}
		\caption{\textcolor{black}{Default transition probability matrix with 20 states constructed from the power profile versus optimally controlled policy matrices where color density indicates the probability value: (a) default matrix; optimal controlled matrix at (b) 12:00 p.m., (c) 1:00 p.m. and (d) 2:00 p.m.}}\label{defaultmat}
	\end{center}
\end{figure} 
Fig.~\ref{compare} shows the comparison of the average power consumption with different control schemes. The red line stands for the power consumption with the centralized control, while the purple and the black lines correspond to the decentralized control with local and global consensus schemes, respectively. As demonstrated in Fig.~\ref{compare}, the centralized and decentralized schemes yield similar consumption behaviors. Moreover, with the error \textcolor{black}{threshold} as 0.01, the global consensus and local consensus schemes yield similar performance.

\begin{figure}[t]
	\begin{center}
		\includegraphics[width=0.46\textwidth]{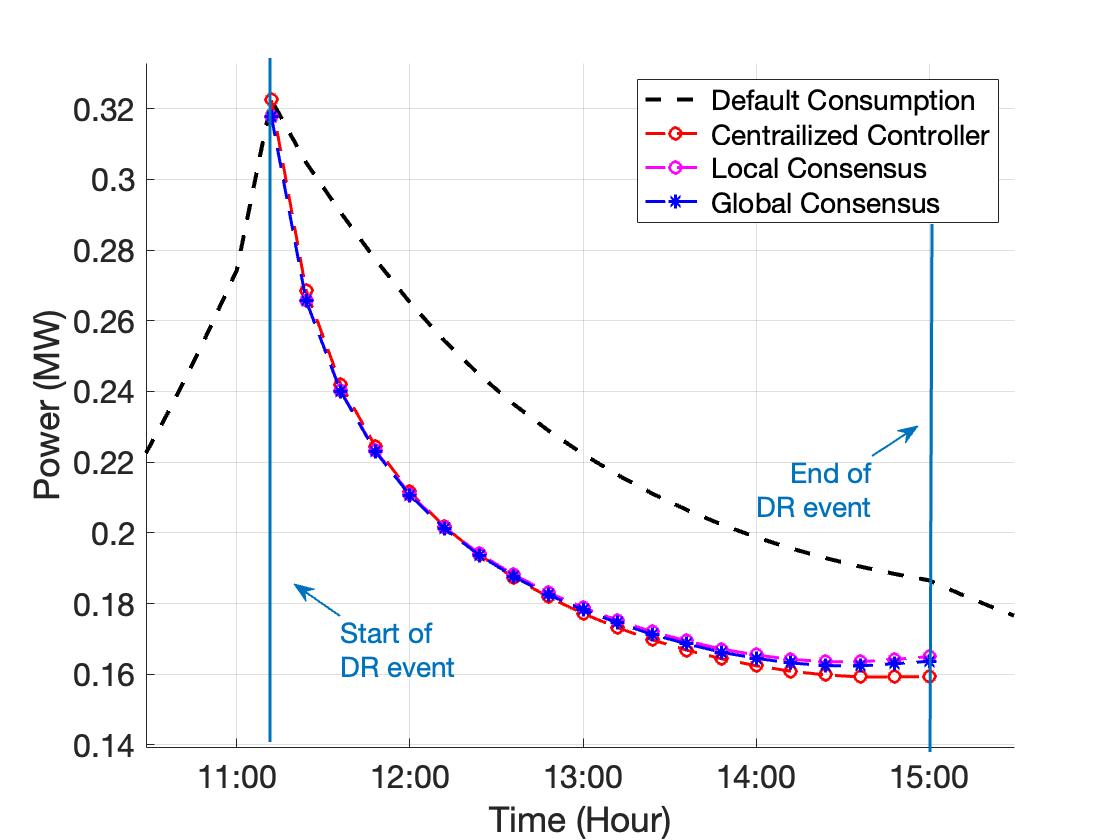}
		\caption{Comparison of average power consumption with different control schemes.}\label{compare}
	\end{center}
\end{figure} 


%
%
%

\section{Conclusions}

We proposed a decentralized consensus control for demand response events using Linearly Solvable Markov decision processes to capture the trade-off between energy consumption reduction and the  discomfort of DR participants.
Using the Bayesian approach, we have shown how prior consensus yields ensemble control policies that achieves consensus among all units. Further, we discussed global and local consensus schemes in the multistage setting and showed that units reach the same consensus under both schemes. 
Finally, we presented a gradient-based distributed control algorithm and demonstrated our results on a realistic case study.



\bibliographystyle{ieeetr} 
\bibliography{literature.bib} 

\end{document}